\newtheorem{theorem}{Theorem}
\begin{document}
%%% Start of article front matter

\title[Lattice Sum Zeros:1]{Zeros of Lattice Sums: 1. Zeros off the Critical Line}% Force line breaks with \\

\author{R.C. McPhedran,\\
School of Physics, University of Sydney,\\
Sydney, NSW Australia 2006.}
  
\begin{abstract} % abstract
%\parttitle{Zeros of Epstein Zeta Functions off the Critical Line} %if any
 Zeros of two-dimensional sums of the Epstein zeta type over rectangular lattices of the type investigated by Hejhal and Bombieri in 1987 are considered, and in particular a sum first studied by Potter and Titchmarsh in 1935. These latter proved several properties of the zeros 
of sums over the rectangular lattice, and commented on the fact that a particular sum had zeros off the critical line. The behaviour of one such zero is investigated  as a function of the  ratio of the periods $\lambda$ of the rectangular lattice, and it is  shown that it evolves continuously along a trajectory which approaches the critical line, reaching it at a point which is a second-order zero of the rectangular lattice sum. It is further shown that ranges of the period ratio $\lambda$ can  be  so identified for which zeros of the rectangular lattice sum lie off the critical line.
\end{abstract}
\maketitle

%\parttitle{Second part title} %if any
%Text for this section.
%\end{abstract}

%%%%%%%%%%%%%%%%%%%%%%%%%%%%%%%%%%%%%%%%%%%%%%
%%                                          %%
%% The keywords begin here                  %%
%%                                          %%
%% Put each keyword in separate \kwd{}.     %%
%%                                          %%
%%%%%%%%%%%%%%%%%%%%%%%%%%%%%%%%%%%%%%%%%%%%%%

%\begin{keyword}
%\kwd{Riemann hypothesis}
%\kwd{lattice sums}
%\kwd{zeros off critical line}
%\end{keyword}

% MSC classifications codes, if any
%\begin{keyword}[class=AMS]
%\kwd[Primary ]{}
%\kwd{}
%\kwd[; secondary ]{}
%\end{keyword}

%\end{abstractbox}
%
%\end{fmbox}% uncomment this for twcolumn layout

%\end{frontmatter}

%\begin{abstract}
%\thanks{Footnote to title of article.}
%\maketitle
\section{Introduction}
There has been considerable interest over around one hundred and fifty years in the properties of sums of analytic functions over lattices generated by variation of two integers over an infinite range. Many results connected with such sums have been collected in the recent book {\em Lattice Sums Then and Now}\cite{lsb}, hereafter denoted {\em LSTN}.
These include analytic results concerning their factorisation into terms involving products of two Dirichlet $L$ functions \cite{zandm}, and also some results on the distribution of zeros on and off the critical line. The latter are of particular interest in that they bear upon the question of whether the Riemann hypothesis that the non-trivial zeros of $\zeta(s)=\zeta(\sigma+i t)$ are all located on the critical line $\sigma=1/2$ can be generalised to certain types of double sum. This proposition reduces to the generalised Riemann hypothesis if the lattice sum can be expressed as a single term involving the product of two Dirichlet $L$ functions, possibly times a prefactor whose zeros lie on the critical line. It is widely accepted that the generalised Riemann hypothesis holds, with strong numerical evidence supporting this, but a proof has long remained elusive.

Epstein zeta functions take the form of a double sum
\begin{equation}
\zeta(s,Q)=\sum_{(p_1,p_2)\neq(0,0)} Q(p_1,p_2)^{-s},~~Q(p_1,p_2)=a p_1^2+ b p_1 p_2+c p_2^2
\label{intro1} 
\end{equation}
being a positive-definite quadratic form with integer coefficients $a,b,c$ and a fundamental discriminant $d=b^2-4 a c$. Potter and Titchmarsh\cite{pandt} proved that $\zeta(s,Q)$ has an infinity of zeros on $\sigma=1/2$ and exhibited a zero lying off the critical line for a particular choice of $\zeta(s,Q)$. Davenport and Heilbronn\cite{dandh1} proved that, if the class number $h(d)$ is even, then $\zeta(s,Q)$ has an infinity of zeros in $\sigma>1$. The condition  $h(d)$ is even is satisfied unless $d=-4, -8$ or $-p$, $p$ prime. They also proved\cite{dandh2} that there are an infinity of zeros in $\sigma>1$ for $h(d)$ odd and different from unity.

Numerical investigations of the distribution of zeros of Epstein zeta functions have been discussed by Hejhal\cite{hejhal1}, and by Bombieri and Hejhal \cite{hejhal2} including the statistics of the separation of zeros. Such investigations are difficult for large $t$ even on the most powerful available computers, due to the number of terms required in the most convenient general expansion for the functions and the degree of cancellation between terms.
 Bogomolny and  Leboeuf\cite{bandl} have also discussed the separation of zeros for the case $a=c=1, b=0$, finding that the known analytic form of this basic sum resulted in
a distribution of zeros with higher probability of smaller gaps than for individual Dirichlet $L$ functions.

McPhedran and coworkers\cite{mcp04,mcp08,mcp10} considered a set of double sums incorporating a trigonometric function of $p_1$ and $p_2$ in the numerator, with the denominator $(p_1^2+p_2^2)^s$. They presented some numerical evidence that a particular group of sums, varying trigonometrically as $\cos (4\theta)$, had all zeros on the critical line, with gaps between the zeros behaving in the manner expected of  Dirichlet $L$ functions. An attempt\cite{mcp10} to prove the equivalence of the Riemann hypothesis for these sums with that for the
Epstein zeta function with $a=c=1, b=0$ contained an error, as was pointed out to the author by Professor Heath-Brown in a private communication.

In this work, we consider the zero off the critical line identified by Potter and Ttichmarsh\cite{pandt} for the Epstein zeta function with $a=1, b=0, c=5$. We replace the integer $c$ by 
the real $\lambda^2$,
so enabling the investigation of the movement of this zero as the ratio of the periods of the rectangular unit cell $\lambda$ varies continuously. We show that the zero follows a smooth trajectory, with the trajectory to the right of the critical line mirrored by one to its left. The two trajectories  of off-axis zeros (i.e., zeros off the critical line) tend to a  common point, from which two zeros then migrate upwards and downwards on the critical line. A consequence of this is that intervals of the period ratio $\lambda$ can be identified in which the Riemann hypothesis fails for the rectangular lattice sums. The results exhibited here, if proven to be general, would provide a way of estimating the density function for  zeros of Epstein zeta functions, both on an off the critical line.

\section{Some Properties of Rectangular Lattice Sums}
We consider the sum discussed in Section 1.7 of {\em LSTN}. This sum is:
\begin{equation}
S_0(\lambda, s)=\sum_{p_1,p_2}' \frac{1}{(p_1^2+p_2^2 \lambda^2)^s},
\label{et1}
\end{equation}
where the sum over the integers $p_1$ and $p_2$ runs over all integer pairs, apart from $(0,0)$, as indicated by the superscript prime. The quantity $\lambda$ corresponds to the period ratio of the
rectangular lattice, and $s$ is an arbitrary complex number. For $\lambda^2$ an integer, this is an Epstein zeta function, but for $\lambda^2$ non-integer we will refer to it as a lattice sum over the rectangular lattice.

Connected to this sum is a  general class of MacDonald function double sums for rectangular lattices:
\begin{equation}
{\cal K}(n,m;s;\lambda)=\pi^n\sum_{p_1,p_2=1}^\infty  \left(\frac{p_2^{s-1/2+n}}{p_1^{s-1/2-n}}\right) K_{s-1/2+m}(2\pi p_1 p_2\lambda).
\label{mac1}
\end{equation}
For $\lambda\geq 1$ and the (possibly complex) number $s$ small in magnitude, such sums converge rapidly, facilitating numerical evaluations. (The sum gives accurate answers
as soon as the argument of the MacDonald function exceeds the modulus of its order by  a factor of 1.3 or so.) The double sums satisfy the following symmetry relation, obtained by interchanging $p_1$ and $p_2$ in the definition (\ref{mac1}):
\begin{equation}
{\cal K}(n,-m;s;\lambda)={\cal K}(n,m;1-s;\lambda).
\label{mac1a}
\end{equation}

The lowest order sum ${\cal K}(0,0;s;\lambda)$ occurs in the representation of $S_0(\lambda, s)$ due to Kober\cite{kober}:
\begin{equation}
\lambda^{s+1/2} \frac{\Gamma(s)}{8\pi^s} S_0(\lambda, s)=\frac{1}{4} \frac{\xi_1(2 s)}{\lambda^{s-1/2}}+\frac{1}{4} \lambda^{s-1/2} \xi_1(2 s-1)+{\cal K}(0,0;s;\frac{1}{\lambda}).
\label{mac2}
\end{equation}
Here $\xi_1(s)$ is the symmetrised zeta function. 
In terms of the Riemann zeta function, (\ref{mac2}) is
\begin{equation}
S_0(\lambda, s)=\frac{2 \zeta (2 s)}{\lambda^{2 s}}+2\sqrt{\pi}\frac{\Gamma(s-1/2) \zeta(2 s-1)}{\Gamma(s)\lambda}+
\frac{8\pi^s}{\Gamma(s) \lambda^{s+1/2}}{\cal K}(0,0;s;\frac{1}{\lambda}).
\label{mac2a}
\end{equation}

A fully symmetrised form of (\ref{mac2}) is:
\begin{equation}
\lambda^{s} \frac{\Gamma(s)}{8\pi^s} S_0(\lambda, s)={\cal T}_+(\lambda, s)+\frac{1}{\sqrt{\lambda}}  {\cal K}(0,0;s;\frac{1}{\lambda}),
\label{mac2s}
\end{equation}
where
\begin{equation}
{\cal T}_+(\lambda, s)=\frac{1}{4}\left[\frac{\xi_1(2 s)}{\lambda^s}+\frac{\xi_1(2 s-1)}{\lambda^{1-s}}\right].
\label{mac2s1}
\end{equation}
Note that ${\cal T}_+(\lambda, 1-s)={\cal T}_+(\lambda, s)$ and $ {\cal K}(0,0;1-s;\lambda)={\cal K}(0,0;s;\lambda)$, so that the left-hand side of equation (\ref{mac2s}) must then be unchanged under replacement of $s$ by $1-s$. The left-hand side is also unchanged under replacement of $\lambda$ by $1/\lambda$, so the same is true for the
sum of the two terms on the right-hand side, although in general it will not be true for them individually. The symmetry relations for $S_0(\lambda, s)$ then are
\begin{equation}
\lambda^{s} \frac{\Gamma(s)}{8\pi^s} S_0(\lambda, s)=\frac{1}{\lambda^{s}} \frac{\Gamma(s)}{8\pi^s} S_0\left(\frac{1}{\lambda}, s\right)=
\lambda^{1-s} \frac{\Gamma(1-s)}{8\pi^{(1-s)}} S_0(\lambda,1- s)=\frac{1}{\lambda^{1-s}} \frac{\Gamma(1-s)}{8\pi^{(1-s)}} S_0\left(\frac{1}{\lambda},1- s\right).
\label{mac2s2}
\end{equation}
From the equations (\ref{mac2s2}), if $s_0$ is a zero of $S_0(\lambda, s)$ then
\begin{equation}
S_0(\lambda, s_0)=0~ \implies~S_0(1/\lambda, s_0)=0=S_0(1/\lambda,1- s_0)=S_0(\lambda,1- s_0).
\label{mac2s2a}
\end{equation}
Another interesting deduction from (\ref{mac2s}) relates to the derivative of $S_0(\lambda, s_0)$ with respect to $\lambda$:
\begin{eqnarray}
& &\lambda^s S_0(\lambda, s) =\frac{1}{\lambda^s} S_0\left(\frac{1}{\lambda}, s\right)~\implies~\nonumber \\
&& s \lambda^{s-1} S_0(\lambda, s) +\lambda^s \frac{\partial}{\partial \lambda} S_0(\lambda, s) =\frac{-s}{\lambda^{s+1}} S_0\left(\frac{1}{\lambda}, s\right)-
\frac{1}{\lambda^{s+2}}\frac{\partial}{\partial \lambda}  S_0\left(\frac{1}{\lambda}, s\right),
\label{mac2s2b}
\end{eqnarray}
so that
\begin{equation}
\left. \frac{\partial}{\partial \lambda} S_0(\lambda, s)\right|_{\lambda=1} =-s S_0(1,s).
\label{mac2s2c}
\end{equation}
Thus, trajectories  of $S_0(\lambda, s)=0$ starting at a zero  $s_0$ for $\lambda=1$ will leave the  line
 $\lambda=1$  at right angles to it as $\lambda$ varies. One such will exist for $\lambda$ increasing, and another for $\lambda$ decreasing.

Combining (\ref{mac2s}) and (\ref{mac2s2}), we arrive at a general symmetry relationship  for ${\cal K}(0,0;s;\lambda)$:
\begin{equation}
{\cal T}_+(\lambda, s)-{\cal T}_+\left( \frac{1}{\lambda}, s \right)=\sqrt{\lambda}{\cal K}(0,0;s;\lambda)-\frac{1}{\sqrt{\lambda}}  {\cal K}\left(0,0;s;\frac{1}{\lambda}\right),
\label{mac2s3}
\end{equation}
or
\begin{eqnarray}
&&\frac{1}{4}\left[\xi_1(2 s)\left(\frac{1}{\lambda^{s}}-\lambda^s\right) +\xi_1(2 s-1)\left(\frac{1}{\lambda^{1-s}}-\lambda^{1-s}\right) \right]=\nonumber\\
&&\sqrt{\lambda}{\cal K}(0,0;s;\lambda)-\frac{1}{\sqrt{\lambda}}  {\cal K}\left(0,0;s;\frac{1}{\lambda}\right).
\label{mac2s4}
\end{eqnarray}
This identity holds for all values of $s$ and $\lambda$. One use of it is to expand about $\lambda=1$, which gives identities for the partial derivatives of 
${\cal K}(0,0;s;\lambda)$ with respect to $\lambda$, evaluated  at $\lambda=1$. The first of these is
\begin{equation}
s\xi_1(2 s)+(1-s) \xi_1(2 s-1)=\left.-2 {\cal K}(0,0;s;1)-4\frac{\partial}{\partial \lambda} {\cal K}(0,0;s;\lambda)\right|_{\lambda=1}.
\label{mac2s5}
\end{equation}

To go beyond first order with  the identity (\ref{mac2s4}), one needs to use the correct form for the expansion variable- rather than use $\lambda -1$, one should expand
using
\begin{equation}
\chi=\lambda-\frac{1}{\lambda},~ \lambda=\frac{\chi}{2}+\sqrt{1+\frac{\chi^2}{4}}, ~\frac{1}{\lambda}=-\frac{\chi}{2}+\sqrt{1+\frac{\chi^2}{4}}.
\label{mac2s6}
\end{equation}

 $S_0(\lambda, s)$ has factorisations in terms of a single product of two Dirichlet $L$ functions (possibly with an algebraic prefactor) for particular values of $\lambda$. We take from Table 1.6 in Chapter 1 of {\em LSTN} the first seven of these:
\begin{eqnarray}
S_0(1, s)&=&4 \zeta(s) L_{-4}(s), ~ S_0(\sqrt{2}, s)=2 \zeta(s) L_{-8}(s),\\
S_0(\sqrt{3}, s)&=&2 (1-2^{1-2 s}) \zeta(s) L_{-3}(s), ~ S_0(\sqrt{4}, s)=2 (1-2^{-s}+2^{1-2 s}) \zeta (s) L_{-4}(s) ,\\
S_0(\sqrt{5}, s)&=&\zeta(s) L_{-20}(s)+L_{-4}(s) L_{+5}(s), ~S_0(\sqrt{6}, s)=\zeta(s) L_{-24}(s)+L_{-3}(s) L_{+8}(s),\\
S_0(\sqrt{7}, s)&=& 2(1-2^{1-s}+2^{1-2 s}) \zeta (s) L_{-7}(s).
\label{sym10a}
\end{eqnarray}
The expressions for $S_0(\sqrt{5}, s)$ and  $S_0(\sqrt{6}, s)$ contain an additive combination of two different Dirichlet $L$ functions. Of the other five factorisations,
the expressions for $S_0(\sqrt{3}, s)$, $S_0(\sqrt{4}, s)$ and $S_0(\sqrt{7}, s)$ have prefactors whose zeros may be determined analytically. These are, for arbitrary integers $n$, 
\begin{equation}
S_0(\sqrt{3},s):~s=\frac{1}{2}\left(1+\frac{(2 n+1)\pi i}{\ln 2}\right),
\label{sym10b}
\end{equation}
\begin{equation}
S_0(\sqrt{4},s):~s=\frac{1}{2}\pm \frac{i\arctan \sqrt{7}}{\ln 2}+\frac{2 n\pi i}{\ln 2},
\label{sym10c}
\end{equation}
and
\begin{equation}
S_0(\sqrt{7},s):~s=\frac{1}{2}+\frac{i \pi}{4 \ln 2}+\frac{2 n \pi i}{\ln 2}.
\label{sym10d}
\end{equation}
There are no other factorisations in Table 1.6 of the form of $S_0(\lambda, s)$ containing only a single term, with the results tabulated going up to $\lambda=\sqrt{1848}$. These results then show that the generalised Riemann hypothesis applies to the seven lattice sums of equations (17-\ref{sym10a}).

\section{Expansions about $\lambda=1$}
We now expand the sum
\begin{equation}
\tilde{S}_0(\lambda,s)=\lambda^{s} \frac{\Gamma(s)}{8\pi^s} S_0(\lambda, s)=\frac{\Gamma(s)}{8\pi^s} \sum_{p_1,p_2}' \frac{1}{(p_1^2/\lambda+p_2^2 \lambda)^s}.
\label{sym1}
\end{equation}
This sum is symmetric under both operations $\lambda\rightarrow1/\lambda$ and $s\rightarrow1-s$.

We use the expansion parameter $\chi$  of (\ref{mac2s6}), but re-express it in trigonometric form:
\begin{equation}
\frac{\chi}{2}=\tan \phi,~~\sqrt{1+\frac{\chi^2}{4}}=\sec \phi,
\label{sym2}
\end{equation}
where we have taken $\cos \phi >0$. We then have:
\begin{equation}
\tilde{S}_0(\lambda,s)=\frac{\Gamma(s)}{8\pi^s (1+\chi^2/4)^{s/2}}  \sum_{p_1,p_2}' \frac{1}{(p_1^2+p_2^2 )^s}
\left[1-\left(\frac{\chi/2}{\sqrt{1+\chi^2/4}}\right) \cos 2\theta_{1,2}\right]^{-s},
\label{sym3}
\end{equation}
where $\cos \theta_{1,2}=p_1/\sqrt{p_1^2+p_2^2}$. We expand the last term in the double sum using the Binomial Theorem, and re-express even powers of $\cos 2\theta_{1,2}$ as combinations of $\cos 4 m\theta_{1,2}$. (Odd powers of $\cos 2\theta_{1,2}$ sum to zero over the square lattice.) The $\chi$-dependent term multiplying the sum in (\ref{sym3}) is $(\cos \phi)^s$, which is expanded as
$(1-\sin^2 \phi)^{s/2}$. The double sums over the square lattice are then written\cite{mcp04}$^-$\cite{mcp10} in terms of
\begin{equation}
\tilde{C}(1,4 m;s)=\frac{\Gamma (2 m+s)}{8 \pi^s} \sum_{p_1,p_2}' \frac{\cos 4 m\theta_{1,2}}{(p_1^2+p_2^2 )^s},
\label{sym4}
\end{equation}
which form is symmetric under $s\rightarrow 1-s$. Note that $\tilde{C}(1,0;s)=\tilde{C}(0,1;s)$.

The result of this procedure is an expression which may be written as:
\begin{equation}
\tilde{S}_0(\lambda,s)=\tilde{C}(0,1;s)+\sum_{m=1}^\infty {\cal S}_{2 m}(s) \sin^{2 m}\phi,
\label{sym5}
\end{equation}
where the symmetry under $\lambda\rightarrow 1/\lambda$ is manifest in the presence of only even powers of $\sin \phi$ on the right-hand side of (\ref{sym5}). The symmetry under $s\rightarrow 1-s$ is evident in the form of the ${\cal S}_{2 m}(s)$, the first few of which are:
\begin{equation}
{\cal S}_{2}(s) =-\frac{1}{4} s (1-s) \tilde{C}(0,1;s)+\frac{1}{4} \tilde{C}(1,4;s),
\label{sym6}
\end{equation}
\begin{equation}
{\cal S}_{4}(s) =-\frac{1}{64} s (1-s) (10-s (1-s)) \tilde{C}(0,1;s)+\frac{1}{48}  (6-s (1-s))\tilde{C}(1,4;s)+\frac{1}{192}\tilde{C}(1,8;s),
\label{sym7}
\end{equation}
\begin{eqnarray}
{\cal S}_{6}(s) &=&-\frac{1}{2304} s (1-s) (264 - 46 s (1 - s) + s^2 (1 - s)^2) \tilde{C}(0,1;s)+\nonumber \\
&& \frac{1}{1536}  (120 - 38 s(1 - s) + s^2 (1 - s)^2)\tilde{C}(1,4;s)+\nonumber\\
&& \frac{1}{3840}(20-s (1-s)) \tilde{C}(1,8;s)+\frac{1}{23040}  \tilde{C}(1,12;s),
\label{sym8}
\end{eqnarray}
and
\begin{eqnarray}
{\cal S}_{8}(s) &=&-\frac{1}{147456} s (1-s) (13392 - 3132 s (1 - s) + 124 s^2 (1 - s)^2-s^3 (1 - s)^3) \tilde{C}(0,1;s)+\nonumber \\
&& \frac{1}{92160}  (5040 - 2292 s(1 - s) + 112 s^2 (1 - s)^2-s^3 (1-s)^3)\tilde{C}(1,4;s)+\nonumber\\
&& \frac{1}{184320}(12-s (1-s))(70-s (1-s)) \tilde{C}(1,8;s)+\frac{1}{654120}(42-s (1-s))  \tilde{C}(1,12;s)+\nonumber \\
& & \frac{1}{5160960} \tilde{C}(1,16;s) .
\label{sym9}
\end{eqnarray}
Note that, apart from the numerical coefficients, each term in the expansions of the ${\cal S}_{2 m}(s)$ has modulus for large $|s|$ of order $|s|^{2 m}$ times a sum of a trigonometric term weighting $1/(p_1^2+p_2^2)^s$.

The form established in equations (\ref{sym5}-\ref{sym9}) makes it easy to establish a useful result.

\begin{theorem}
A trajectory $\tilde{S}_0(\lambda,s)=0$ giving $s$ as a function of $\lambda$ which contains a point $s_0$ on the critical line at which  $ \partial\tilde{S}_0(\lambda,s)/\partial s \neq 0$
must include an interval around $s_0$ lying on the critical line. Furthermore, if $s_*$ is a point on the critical line at which $\tilde{S}_0(\lambda,s)=0$ and  $ \partial\tilde{S}_0(\lambda,s)/\partial s =0$, then a trajectory $\tilde{S}_0(\lambda,s)=0$ passing through $s_*$  runs along the critical line along one side of $t_*$ and at right angles to it on the other side.
\label{inthm}
\end{theorem}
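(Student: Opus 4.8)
The plan is to base everything on one structural fact: for real $\lambda$ the function $\tilde{S}_0(\lambda,s)$ takes real values on the critical line. This is already manifest in the expansion (\ref{sym5})--(\ref{sym9}): each $\tilde{C}(1,4m;s)$ is real-analytic and invariant under $s\to1-s$, so on $s=1/2+\rmi t$ one has $\overline{\tilde{C}(1,4m;s)}=\tilde{C}(1,4m;\bar{s})=\tilde{C}(1,4m;1-s)=\tilde{C}(1,4m;s)$, while the coefficients built from $s(1-s)$ and the factors $\sin^{2m}\phi$ are real there as well. (Equivalently, the functional equation $\tilde{S}_0(\lambda,s)=\tilde{S}_0(\lambda,1-s)$ of (\ref{mac2s2}), together with real-analyticity $\overline{\tilde{S}_0(\lambda,s)}=\tilde{S}_0(\lambda,\bar{s})$, yields reality on the line.) Setting $h(\lambda,t)=\tilde{S}_0(\lambda,1/2+\rmi t)$, a real-analytic function of the real variables $\lambda,t$, I record the three facts I will use on the line: $\partial_s\tilde{S}_0=-\rmi\,\partial_t h$ is purely imaginary, $\partial_s^2\tilde{S}_0=-\partial_t^2 h$ is real, and $\partial_\lambda\tilde{S}_0=\partial_\lambda h$ is real.

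For the first assertion I would apply the implicit function theorem to the real equation $h(\lambda,t)=0$. At $s_0=1/2+\rmi t_0$ the hypothesis $\partial_s\tilde{S}_0\neq0$ makes $\partial_t h=\rmi\,\partial_s\tilde{S}_0$ real and nonzero, so there is a unique smooth $t(\lambda)$ with $t(\lambda_0)=t_0$ and $h(\lambda,t(\lambda))=0$, i.e.\ a family of zeros lying exactly on the critical line for $\lambda$ in an interval about $\lambda_0$. Since $\partial_s\tilde{S}_0\neq0$ also, the holomorphic implicit function theorem supplies a unique zero trajectory $s(\lambda)$ through $s_0$; this critical-line family must coincide with it, so the trajectory contains a critical-line interval around $s_0$.

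For the second assertion, at $s_*=1/2+\rmi t_*$ we have $\tilde{S}_0=0$ and $\partial_s\tilde{S}_0=0$, and I treat the second-order case in which $B:=\partial_s^2\tilde{S}_0(\lambda_*,s_*)$ and $A:=\partial_\lambda\tilde{S}_0(\lambda_*,s_*)$ are nonzero, both real by the lemma. With $w=s-s_*$ and $\delta\lambda=\lambda-\lambda_*$, the Taylor expansion collapses to the dominant balance $A\,\delta\lambda+\tfrac12 B\,w^2+\cdots=0$, i.e.\ $w^2=c\,\delta\lambda+\cdots$ with $c:=-2A/B$ real. Because $c$ is real, the leading behaviour of $w$ is purely real for $c\,\delta\lambda>0$ and purely imaginary for $c\,\delta\lambda<0$: on one side of $\lambda_*$ the zeros are $s=\tfrac12\pm\sqrt{c\,\delta\lambda}+\rmi t_*$, lying off the line on the perpendicular $t=t_*$, while on the other side they are $s=\tfrac12+\rmi\big(t_*\pm\sqrt{-c\,\delta\lambda}\big)$, lying on the line and separating in $t$. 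The on-line branch is confirmed exactly from the real expansion $h(\lambda,t)=A\,\delta\lambda-\tfrac12 B\,(t-t_*)^2+\cdots$, and the orthogonality of the off-line branch is corroborated by invariance of the zero set under the reflection $s\mapsto1-\bar{s}$ across the critical line, which forces any smooth branch through $s_*$ either to lie in the line or to cross it at right angles.

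The part I expect to require the most care is the local normal form in the second assertion: one must check via the Puiseux expansion that the terms omitted from the balance---the mixed term $\partial_\lambda\partial_s\tilde{S}_0\,\delta\lambda\,w$ and the cubic $\tfrac16\partial_s^3\tilde{S}_0\,w^3$---are genuinely $O(\delta\lambda^{3/2})$, so that the real-versus-imaginary character of $w$ is decided by the sign of $c\,\delta\lambda$ alone. One must also secure the non-degeneracy conditions $A\neq0$ and $B\neq0$: the former guarantees the zero actually moves as $\lambda$ varies, while the latter is precisely the hypothesis that $s_*$ is an exactly second-order zero. If $B$ were also to vanish the balance and the resulting geometry would change, so the theorem as stated is really the statement for a second-order critical-line zero.
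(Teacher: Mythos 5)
Your proposal is correct, and it reaches the theorem by a genuinely different route from the paper. The paper works entirely inside its expansion (\ref{sym5}): the reality of the coefficients ${\cal S}_{2m}(s)$ on the critical line (inherited from the $s\to 1-s$ symmetry of the $\tilde{C}(1,4m;s)$) makes the numerator of (\ref{inthm2}) real and its denominator pure imaginary, so $ds$ is pure imaginary and the trajectory stays on the line; at the degenerate point it writes the second-order relation (\ref{inthm3}), $ds^2 \propto dw$ with a real constant, and reads the two regimes off the sign of $dw$. You instead isolate the single structural fact --- reality of $\tilde{S}_0$ on the critical line --- directly from the functional equation (\ref{mac2s2}) together with Schwarz reflection, and then run standard local machinery: the real and holomorphic implicit function theorems for the first assertion, and an explicit Taylor/Puiseux dominant balance $A\,\delta\lambda+\tfrac12 B w^2=0$ with $A,B$ real for the second. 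Your route buys rigor exactly where the paper is informal: you show that \emph{any} zero trajectory through $s_0$ must coincide with the critical-line family (the paper's differential argument only shows the tangent is vertical point by point, and tacitly differentiates an infinite series termwise); you check that the omitted terms $\partial_\lambda\partial_s\tilde{S}_0\,\delta\lambda\,w$ and $\partial_s^3\tilde{S}_0\,w^3$ are $O(\delta\lambda^{3/2})$, so the leading balance really decides the geometry; and you make explicit the non-degeneracy hypotheses $A=\partial_\lambda\tilde{S}_0\neq 0$, $B=\partial_s^2\tilde{S}_0\neq 0$ that the paper leaves implicit in the nonvanishing of the numerator and denominator of (\ref{inthm3}). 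What the paper's route buys is economy and coherence with its own apparatus, since the same expansion and the natural variable $w=\sin\phi$ from (\ref{mac2s6}) are used throughout the rest of the analysis.

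Two small caveats, neither fatal. First, the off-line zeros do not sit exactly on $t=t_*$: the next Puiseux coefficient is pure imaginary (the mixed partial and $\partial_s^3\tilde{S}_0$ are pure imaginary on the line), so those zeros lie at height $t_*+O(\delta\lambda)$; what holds exactly is orthogonality of the tangent at $s_*$, which is all the theorem asserts, and your phrasing elsewhere makes clear this is what you prove. Second, your closing remark that invariance of the zero set under $s\mapsto 1-\bar{s}$ by itself forces a branch either to lie in the line or to cross it at right angles is not quite true: two oblique branches at angles $\pm\theta$ interchanged by the reflection would also leave the zero set invariant. It is the reality of $c=-2A/B$, i.e.\ your dominant balance, that excludes the oblique case; since you offer the reflection argument only as corroboration, the proof stands.
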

\begin{proof}
Let $s_0$ be a point on the critical line for which $\tilde{S}_0(\lambda,s)=0$ and $ \partial\tilde{S}_0(\lambda,s)/\partial s \neq 0$. Let $w=\sin(\phi)$. The differential equation for trajectories along which $\tilde{S}_0(\lambda,s)$ is constant is described by the equation
\begin{equation}
d\tilde{S}_0(\lambda,s)=0=\frac{\partial \tilde{ C}(0,1;s)}{\partial s} d s+\sum_{m=1}^\infty w^{2 m} \frac{\partial {\cal S}_{2 m}(s)}{\partial s} ds +\sum_{m=1}^\infty 2 m w^{2 m-1} {\cal S}_{2 m}(s) d w.
\label{inthm1}
\end{equation}
We solve (\ref{inthm1}) for $d s$:
\begin{equation}
ds=\left\{\frac{-[\sum_{m=1}^\infty 2 mw^{2 m-1}  {\cal S}_{2 m}(s)] }
{\frac{\partial \tilde{ C}(0,1;s)}{\partial s} +\sum_{m=1}^\infty w^{2 m} \frac{\partial {\cal S}_{2 m}(s)}{\partial s} }\right\}_{s=s_0} dw.
\label{inthm2}
\end{equation}
Using this to construct the trajectory from the point $s_0$ on the critical line, corresponding to $w_0$, each term in the numerator is real, while each term in the denominator is pure imaginary. Thus, $d s$ is pure imaginary, and the trajectory continues along the critical line in an interval surrounding $s_0$. The proof applies to $\tilde{S}_0(\lambda,s)$ taking any real constant value, including of course zero. We can continue to enlarge the interval by considering successive points $s_0$ until we reach a point where 
$ \partial\tilde{S}_0(\lambda,s)/\partial s = 0$.

For the second proposition, given that the first two terms in the Taylor series of $\tilde{S}_0(\lambda,s)$ about $s=s_*$ are zero, then the trajectory $\tilde{S}_0(\lambda,s)=0$ 
is described by
\begin{equation}
ds^2=\left\{\frac{-[\sum_{m=1}^\infty 4 mw^{2 m-1}  {\cal S}_{2 m}(s)] }
{\frac{\partial^2 \tilde{ C}(0,1;s)}{\partial s^2} +\sum_{m=1}^\infty w^{2 m} \frac{\partial^2 {\cal S}_{2 m}(s)}{\partial s^2} }\right\}_{s=s_*} d  w.
\label{inthm3}
\end{equation}
If the constant in the curly brackets in (\ref{inthm3}) is positive, then $ds^2=d\sigma^2$ if $d w>0$, with $d\sigma\propto \sqrt{d w}$ then, while 
$ds^2=-dt ^2$  and  $dt \propto \sqrt{-d w}$ if $d w<0$. If the constant in the curly brackets in (\ref{inthm3}) is negative, then $ds^2=d\sigma^2$ if $d w<0$,
and $ds^2=-dt ^2$ if $d w>0$.
\end{proof}

Figure \ref{figcpex} shows contours of $\log|{\cal  S}_{0}(\lambda, 1/2+i t)|$ in the plane $(\lambda, t)$,  calculated using numerical summation of the expression (\ref{mac2}).
Also indicated are  positions of zeros
of this function, calculated from  the factorised forms (17-\ref{sym10a}). 

The contours of zero amplitude of $S_0(\lambda, s)$ shown in Fig. \ref{figcpex} have a general trend of decreasing as $\lambda$ increases away from unity, but may have intervals in which they increase. Some of the turning points in these curves are associated with prefactor and Dirichlet $L$ function zeros being in close proximity.

Theorem \ref{inthm} does not imply that all zeros of the lattice sums  $S_0(\lambda, s)$ lie on the critical line. Indeed, it has been known since the work of Potter and Titchmarsh in 1935 that the sum $S_0(\sqrt{5}, s)$ has zeros off the critical line. The first  such is illustrated in Fig. \ref{figpandt}. In the next section, we will examine whether zeros off the critical line
can be linked to factorised forms of ${\cal  S}_{0}(\lambda, s)$, like those in (17-\ref{sym10a}). What is clear from Theorem \ref{inthm} is that the turning points of contours of zero amplitude of $S_0(\lambda, s)$ evident in Fig. \ref{figcpex}, where $\partial S_0(\lambda, 1/2+i t)/\partial t=0$, should play an important role in any linkage between zeros off the critical line and those on the critical line.

The equation (\ref{mac2s}) gives $\tilde{ S}_0(\lambda, s)$ as the sum of ${\cal T}_+(\lambda, s)$ and ${\cal K}(0,0;s;\frac{1}{\lambda})/\sqrt{\lambda}$. We can readily obtain the expansion of ${\cal T}_+(\lambda, s)$ in powers of $\sin \phi$ if in (\ref{mac2s1}) we replace $\lambda$ by $(1+\sin \phi)/\sqrt{1-\sin^2 \phi}$.  It is also useful to replace $\xi_1(2 s)$ and $\xi_1(2 s-1)$ by superpositions of functions which are even and odd with respect to the transformation $s\rightarrow 1-s$:
\begin{equation}
\xi_1(2 s)=2[{\cal T}_+(1, s)+{\cal T}_-(1, s)],~~\xi_1(2 s-1)=2[{\cal T}_+(1, s)-{\cal T}_-(1, s)].
\label{sym10e}
\end{equation}
We then obtain:
 \begin{eqnarray}
 {\cal T}_+(\lambda, s)&=&{\cal T}_+(1, s) \{1 + 1/2 (-1 + s)^2 \sin^2 (\phi) + 
 \frac{1}{24} [12 + s (-34 + s (32 + (-8 + s) s))] \sin^4 (\phi)+\nonumber\\
 & & \frac{1}{720} [360 + s (-1212 + s (1504 + s (-750 + s (205 + (-18 + s) s))))]\sin^6 (\phi)+\ldots \} \nonumber\\
 & & +{\cal T}_-(1, s) \{-\frac{1}{2}\sin (\phi)+ [-\frac{1}{2} + s - \frac{3 s^2}{4}] \sin^3(\phi)+ \nonumber\\
 & &  \frac{1}{48} [-24 + s (68 + s (-76 + s (28 - 5 s) ))] \sin^5(\phi) + \nonumber \\
  & & \frac{1}{1440}[(-720 + s (2424 + 
      s (-3328 + s (1980 + s (-670 + s(96 - 7 s) )))] \sin^7(\phi)+ \ldots \}. \nonumber \\
  & &
  \label{sym11}
 \end{eqnarray}
 This expression contains both odd and even powers in $\sin(\phi)$, while the dependence of the coefficients of  powers of $\sin (\phi)$ on $s$ is of mixed parity under $s\rightarrow 1-s$. The functions ${\cal T}_+(1, s) $ and ${\cal T}_-(1, s) $ are respectively even and odd under $s\rightarrow 1-s$, all their zeros lie on the critical line and form distinct sets with the same distribution function, while all zeros are simple\cite{ki}$^,$\cite{lagandsuz}. From (\ref{mac2s1}), the equation for zeros of $ {\cal T}_+(\lambda, s)$ is
 \begin{equation}
 \frac{\xi_1(2 s-1)}{\xi_1(2 s)}=\lambda^{1-2 s}.
 \label{sym12}
 \end{equation}
 The left-hand side in (\ref{sym12}) has modulus smaller than unity in $\sigma>1/2$, and larger than unity in $\sigma<1/2$. The opposite is true for the right-hand side if $\lambda<1$.
 All zeros of $ {\cal T}_+(\lambda, s)$ thus lie on the critical line if $\lambda<1$.
 \section{The Trajectory of an Off-Axis Zero}
 Figure \ref{figoat} shows the trajectory in the $\sigma, t$ plane  of numerically-determined zeros of  $\tilde{ S}_0(\lambda, s)$, as $\lambda$ varies. The trajectory curves upwards
 as $\lambda$ decreases towards $\sqrt{4}$,  and reaches the critical line at a point sandwiched between a prefactor zero of $S_0(\sqrt{4}, s)$ at $t\approx 16.384603$ and a zero of $L_{-4}(s)$ at $t\approx 16.342539$. The Potter-Titchmarsh zero is indicated by a point near the rightmost extremity of the trajectory. The trajectory curves down and back towards the critical line as $\lambda$ increases  towards $\sqrt{6.343472}$. This value of course does not correspond to a known factorisation of $\tilde{ S}_0(\lambda, s)$.
 
 In the vicinity of the upper intersection point, we illustrate the behaviour of $\tilde{ S}_0(\lambda, s)$ in Figs. \ref{uppt1}, \ref{uppt2}. Fig \ref{uppt1} shows the endpoint chosen for a process of localising the $\lambda$ value at which zeros transition from positions  off the critical line (curves with a single central minimum) to on the critical line (curves with two negative approximate singularities symmetrically located about a local maximum). This transition value of $\lambda$ is then between 4.0007109411 and 4.0007109410.
 Figure \ref{uppt2} shows the variation of the logarithmic modulus and the argument of $\tilde{ S}_0(\lambda, s)$ in the $\sigma,t$ plane for a value of $\lambda$ just before the transition value, where the locations of two zeros to the left and right of the critical line are evident.

Similar figures for the lower intersection point are given in Figs. \ref{lowpt1}, \ref{lowpt2}. In this case, the transition value of $\lambda$ lies between 6.343471 and 6.343472,
with off-axis zeros on the low side of this value. This is clearly evident in the amplitude and argument plots of Fig. \ref{lowpt2}. The argument plots in Figs. \ref{uppt2} and \ref{lowpt2}
are both clearly in support of the behaviour at the exact transition value corresponding to a zero of multiplicity two on the critical line, although this cannot be proved numerically.

The conclusion of this work is that  zeros of $\tilde{ S}_0(\lambda, s)$ off the critical line can lie on constant-modulus trajectories reaching the critical line. Such trajectories behave in a way consistent with the generalised Riemann hypothesis. The point where they reach the critical line corresponds to a second-order zero of $\tilde{ S}_0(\lambda, s)$, and after reaching the critical line the trajectory continues along the critical line for an interval. Furthermore, each such trajectory as that in Fig. \ref{figoat} defines an interval of the period ratio $\lambda$ of rectangular lattices for which the sum does not obey the Riemann hypothesis (in the case of  Fig. \ref{figoat}, the range of $\lambda$ is from 2 to around 2.51863). . Given the generalised Riemann hypothesis holds, $\tilde{ S}_0(\lambda, s)$ cannot have a one term representation as a product of Dirichlet $L$ functions anywhere in such $\lambda$ ranges.

Further work on the properties of such trajectories would be of interest and value. For example, the five factorizable forms in equation (\ref{sym10a})  containing a single product of Dirichlet $L$ functions may be used to give an idea of the density functions for zeros of lattice sums in corresponding ranges of $\lambda$. Each such form has a prefactor term with a distribution function for zeros which is linear in $t$, and a product of two $L$ functions whose distribution function contains terms in $t \log t$ and $t$, with only the second differing according to 
the integer $m$ in the product $L_1(s) L_{-m}(s)$. Each form also has a distribution function of zeros of $\partial S_0(\lambda, s)/\partial s$, which from the results of this work correlates with the distribution function of zeros  of $S_0(\lambda, s)$ off the critical line. It thus seems natural that the distribution function of zeros off the critical line should scale linearly with $t$, as does the discordance between the distribution functions of zeros for the five factorizable forms referred to above. This would agree with the results of Bombieri and Hejhal \cite{hejhal1, hejhal2} that, assuming the Generalized Riemann Hypothesis and a conjecture on the spacing of zeros of Dirichlet $L$ functions, almost all zeros of Epstein zeta functions lie on the critical line.

%%%%%%%%%%%%%%%%%%%%%%%%%%%%%%%%%%%%%%%%%%%%%%%%%%%%%%%%%%%%%%%%%%%%%
\begin{figure}[h]
\includegraphics[width=3.0in]{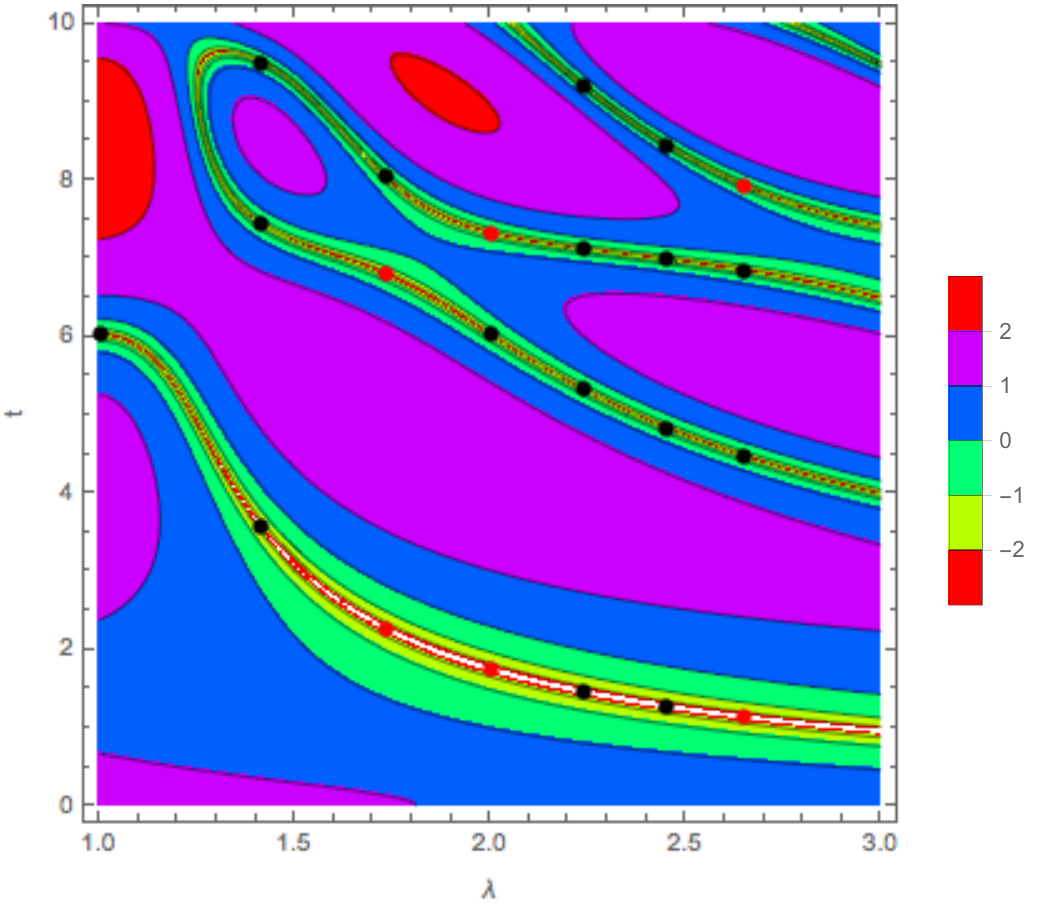}
 ~~\includegraphics[width=3.0in]{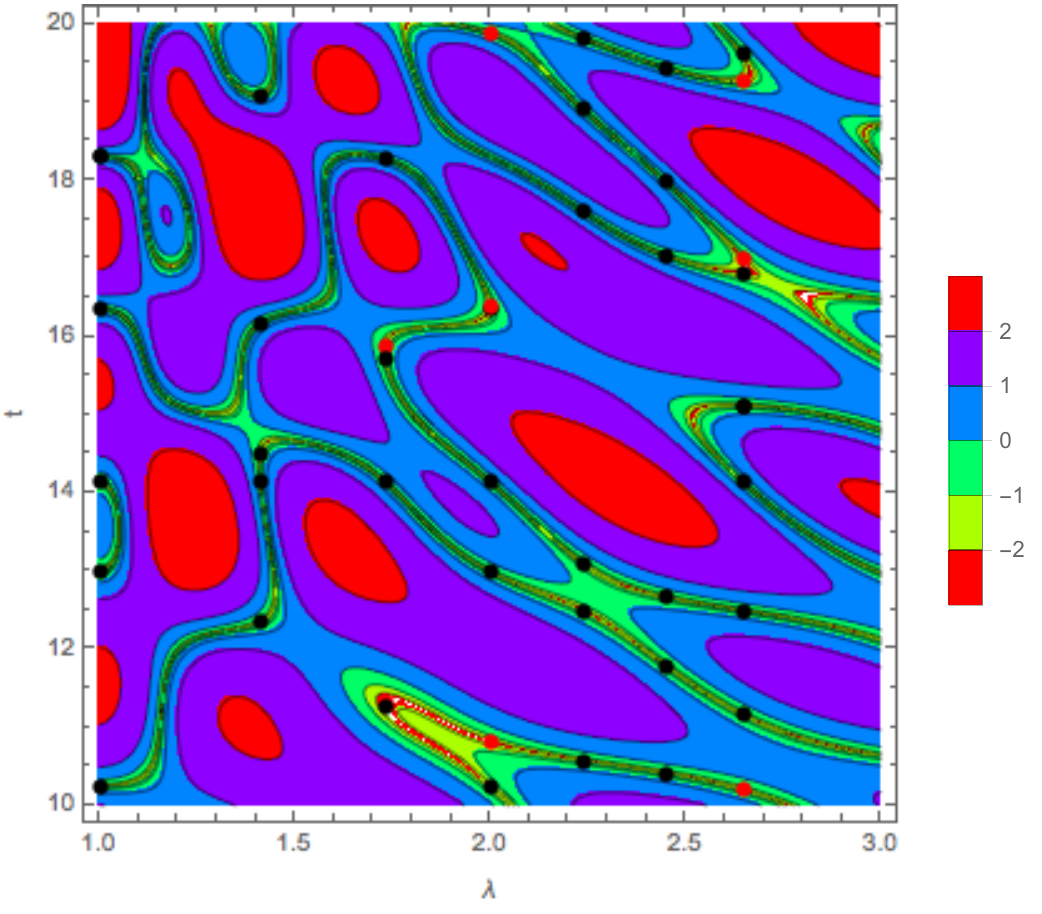}
\caption{Contours of $\log|{\cal  S}_{0}(\lambda, 1/2+i t)|$ in the plane $(\lambda, t)$. Black dots and red dots correspond to zeros for which there is a factorization given the text, with the red dots being known analytically. }
\label{figcpex}
\end{figure}
%%%%%%%%%%%%%%%%%%%%%%%%%%%%%%%%%%%%%%%%%%%%%%%%%%%%%%%%%%%%%%%%%%%%%

%%%%%%%%%%%%%%%%%%%%%%%%%%%%%%%%%%%%%%%%%%%%%%%%%%%%%%%%%%%%%%%%%%%%%
\begin{figure}[h]
\includegraphics[width=3.0in]{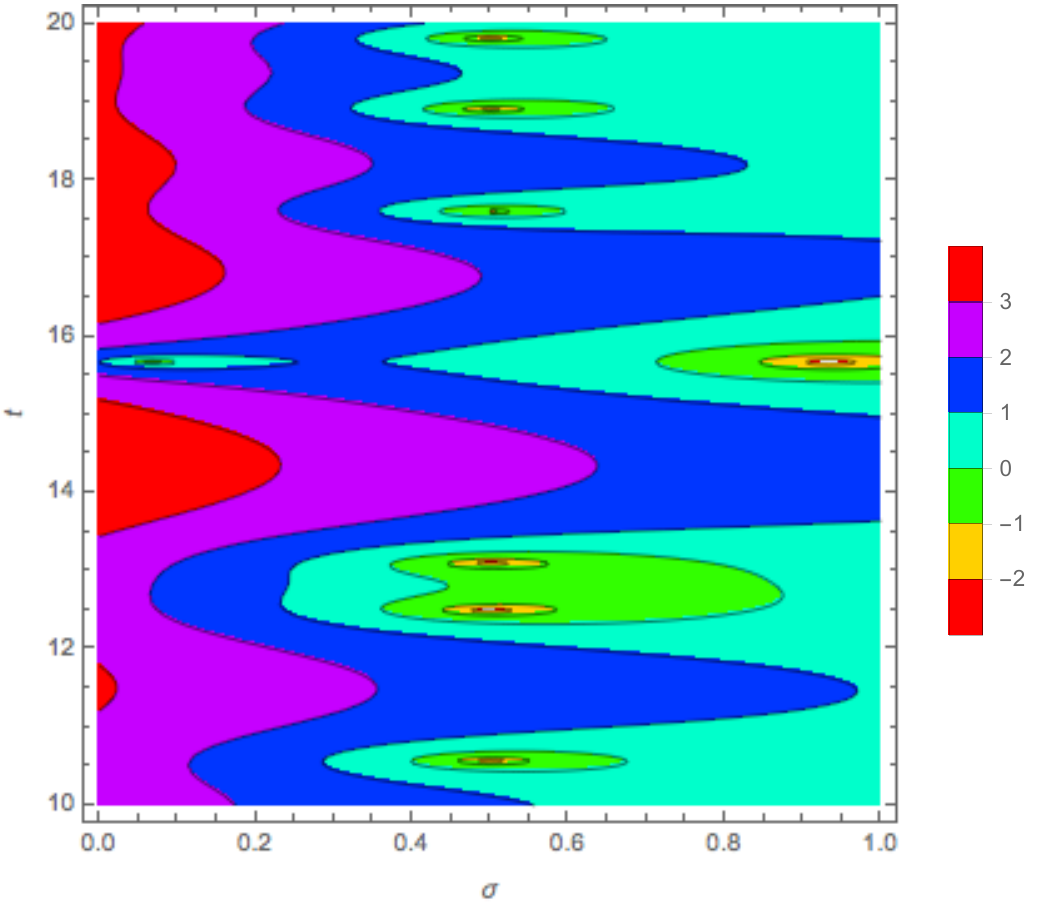}
 ~~\includegraphics[width=3.0in]{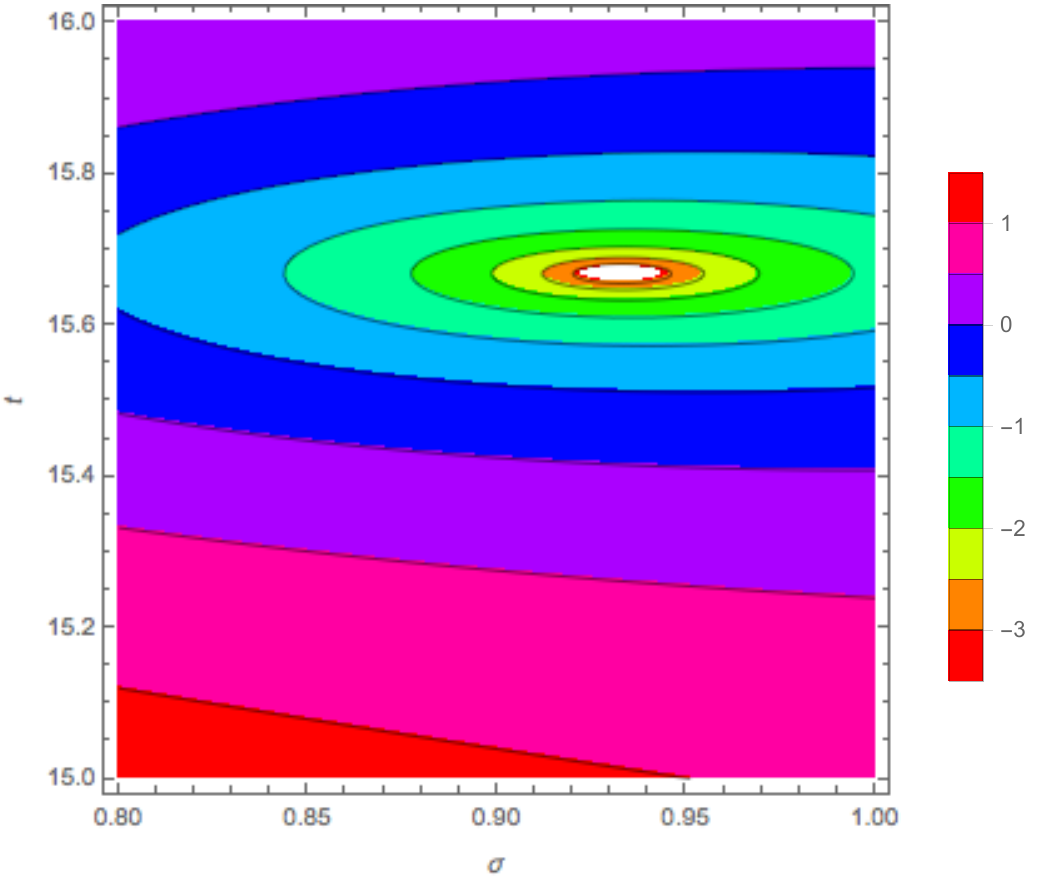}
\caption{Contours of $\log|{\cal  S}_{0}(\sqrt{5}, \sigma+i t)|$ in the plane $(\sigma, t)$. The first off-axis zeros of this sum are illustrated, which lie near $s_{PT}=0.9329 +15.6682 i$ and $1-\overline{s_{PT} }$.}
\label{figpandt}
\end{figure}
%%%%%%%%%%%%%%%%%%%%%%%%%%%%%%%%%%%%%%%%%%%%%%%%%%%%%%%%%%%%%%%%%%%%%

 %%%%%%%%%%%%%%%%%%%%%%%%%%%%%%%%%%%%%%%%%%%%%%%%%%%%%%%%%%%%%%%%%%%%%
\begin{figure}[h]
\includegraphics[width=4.5in]{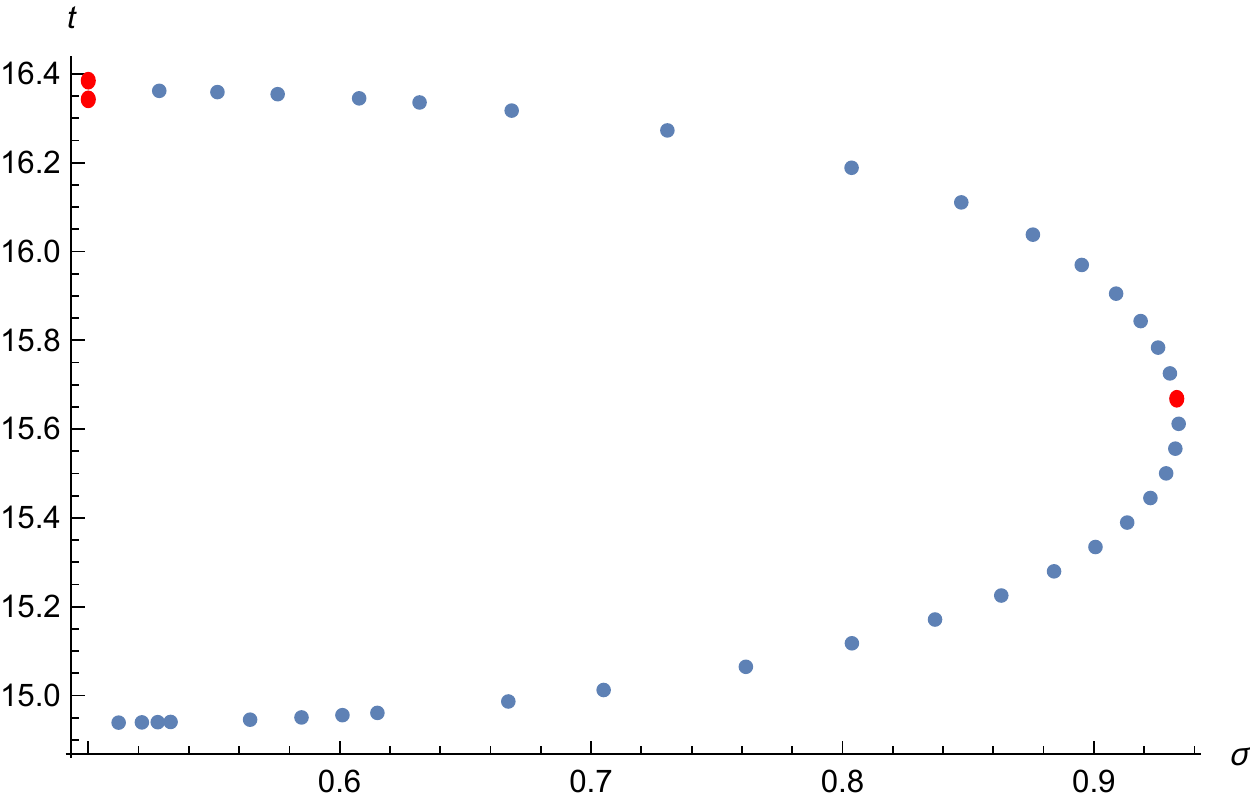}
% ~~\includegraphics[width=3.0in]{smcpt2a.pdf}
 %\includegraphics[width=3.0in]{transpcfdc.pdf}
\caption{The trajectory of a zero off the critical line of $\tilde{ S}_0(\lambda, s)$, as $\lambda$ varies, plotted in the $\sigma, t$ plane. The red dots represent the zero off the critical line corresponding to $\lambda=\sqrt{5}$, and two zeros on the critical line corresponding to $\lambda=\sqrt{4}$. }
\label{figoat}
\end{figure}
%%%%%%%%%%%%%%%%%%%%%%%%%%%%%%%%%%%%%%%%%%%%%%%%%%%%%%%%%%%%%%%%%%%%%

 %%%%%%%%%%%%%%%%%%%%%%%%%%%%%%%%%%%%%%%%%%%%%%%%%%%%%%%%%%%%%%%%%%%%%
\begin{figure}[h]
\includegraphics[width=4.5in]{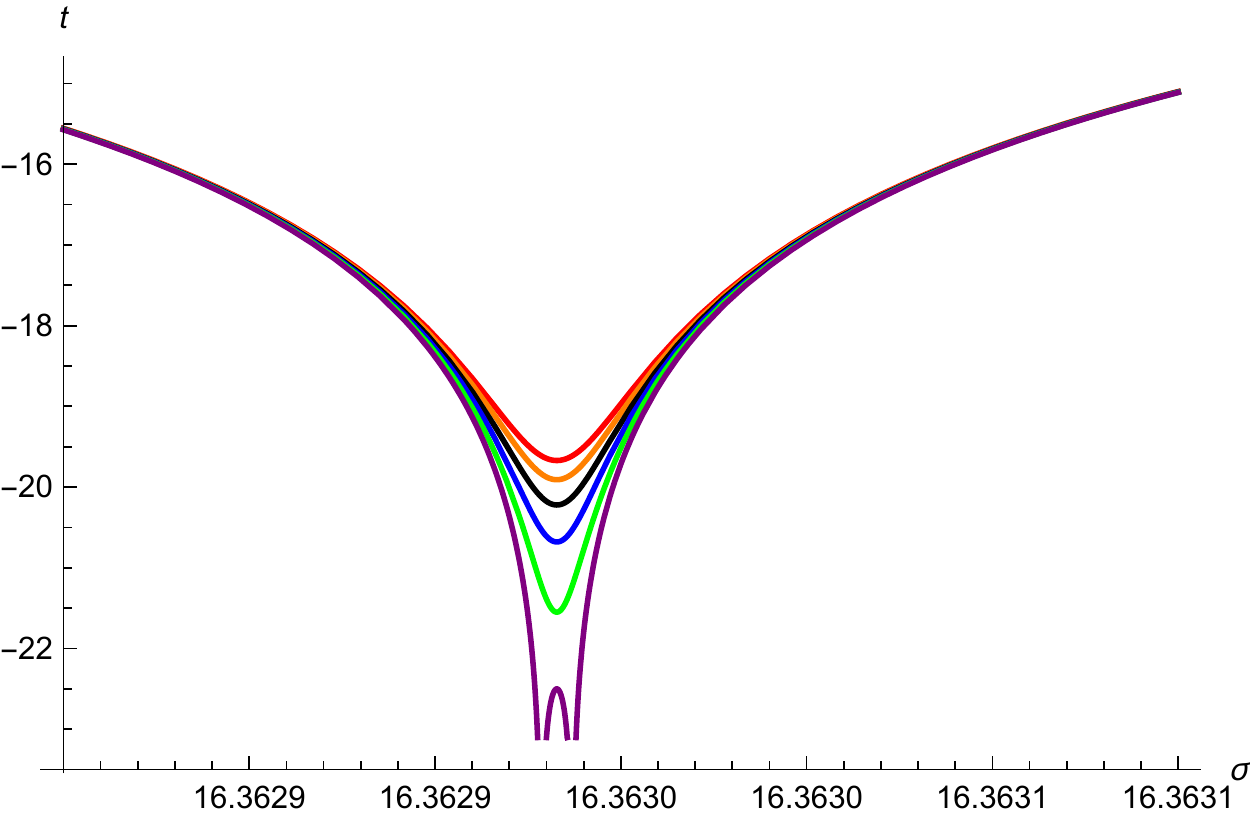}
% ~~\includegraphics[width=3.0in]{smcpt2a.pdf}
 %\includegraphics[width=3.0in]{transpcfdc.pdf}
\caption{Plots of $\log |\tilde{ S}_0(\lambda, 1/2+ i t)|$ as a function of $t$ for $\lambda$ ranging from 4.0007109415 to  4.0007109410 in equal decrements, for respective line colours: red, orange, black, blue, green, purple. }
\label{uppt1}
\end{figure}
%%%%%%%%%%%%%%%%%%%%%%%%%%%%%%%%%%%%%%%%%%%%%%%%%%%%%%%%%%%%%%%%%%%%%
 
 %%%%%%%%%%%%%%%%%%%%%%%%%%%%%%%%%%%%%%%%%%%%%%%%%%%%%%%%%%%%%%%%%%%%%
\begin{figure}[h]
\includegraphics[width=3.0in]{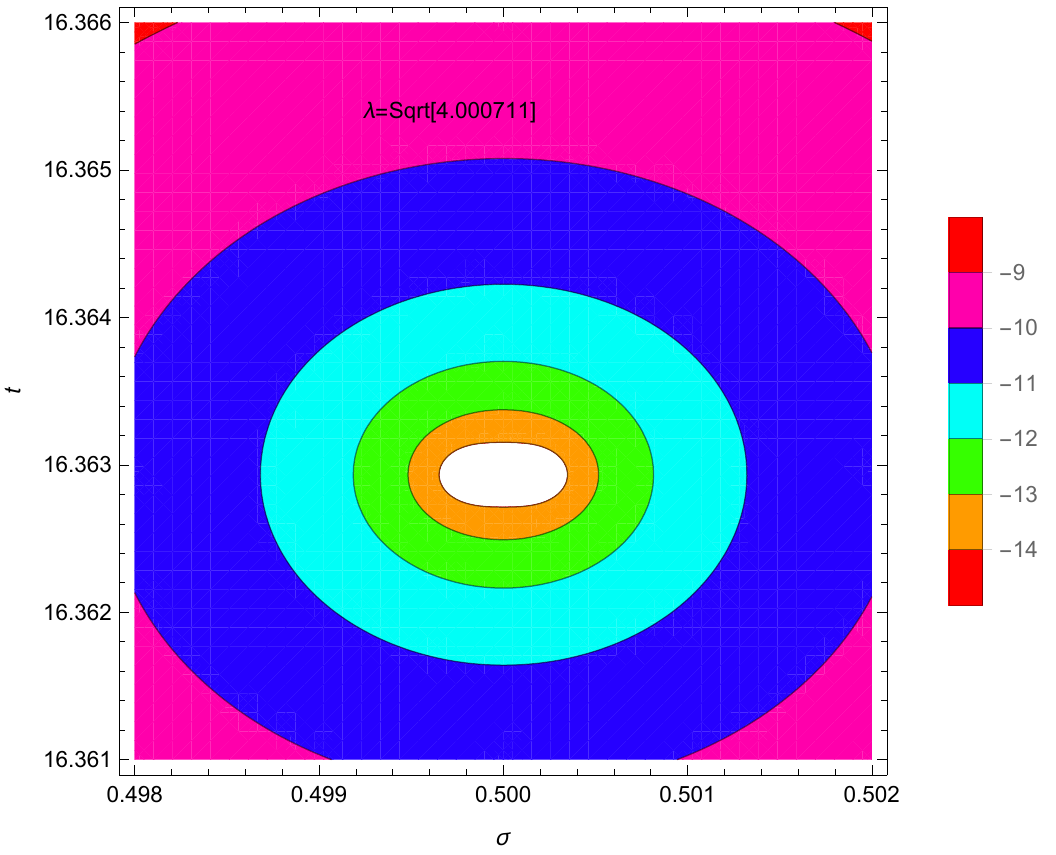} ~~\includegraphics[width=3.0in]{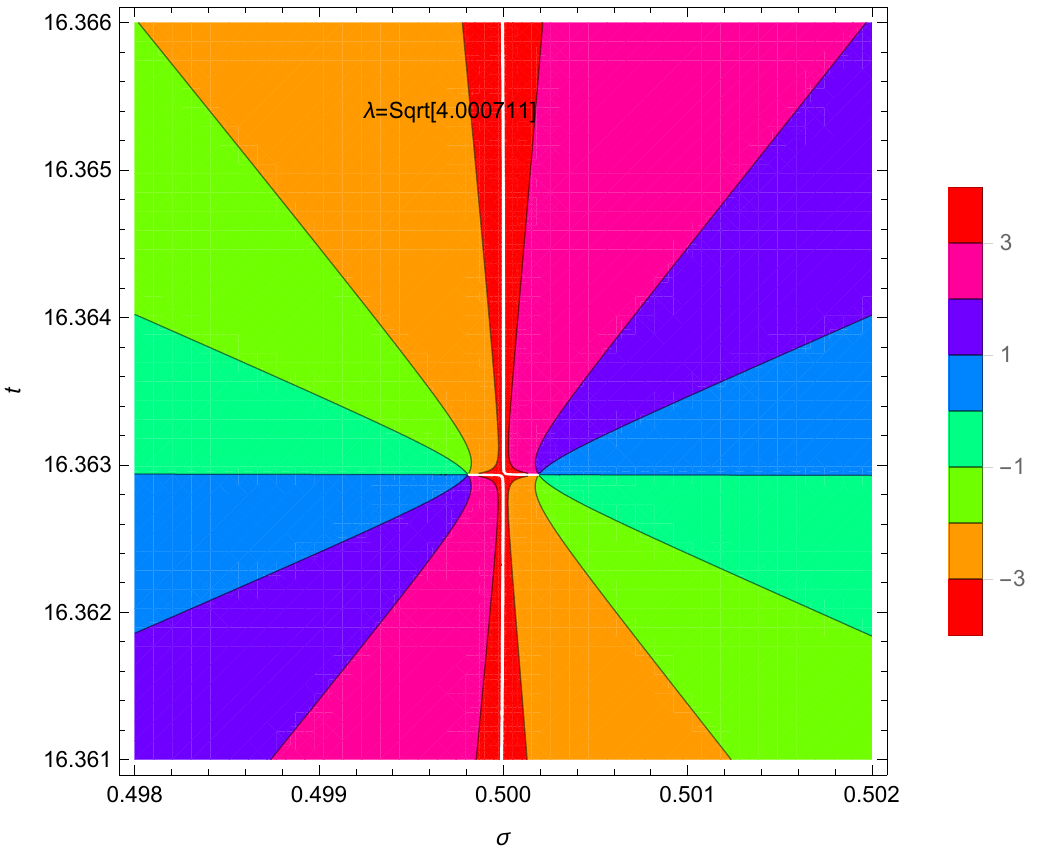}
\caption{Contour plots of the logarithmic modulus (left) and the argument (right)  of $\tilde{ S}_0(\lambda, \sigma+ i t)$ for $\lambda=4.000711$. }
\label{uppt2}
\end{figure}
%%%%%%%%%%%%%%%%%%%%%%%%%%%%%%%%%%%%%%%%%%%%%%%%%%%%%%%%%%%%%%%%%%%%%

 %%%%%%%%%%%%%%%%%%%%%%%%%%%%%%%%%%%%%%%%%%%%%%%%%%%%%%%%%%%%%%%%%%%%%
\begin{figure}[h]
\includegraphics[width=4.5in]{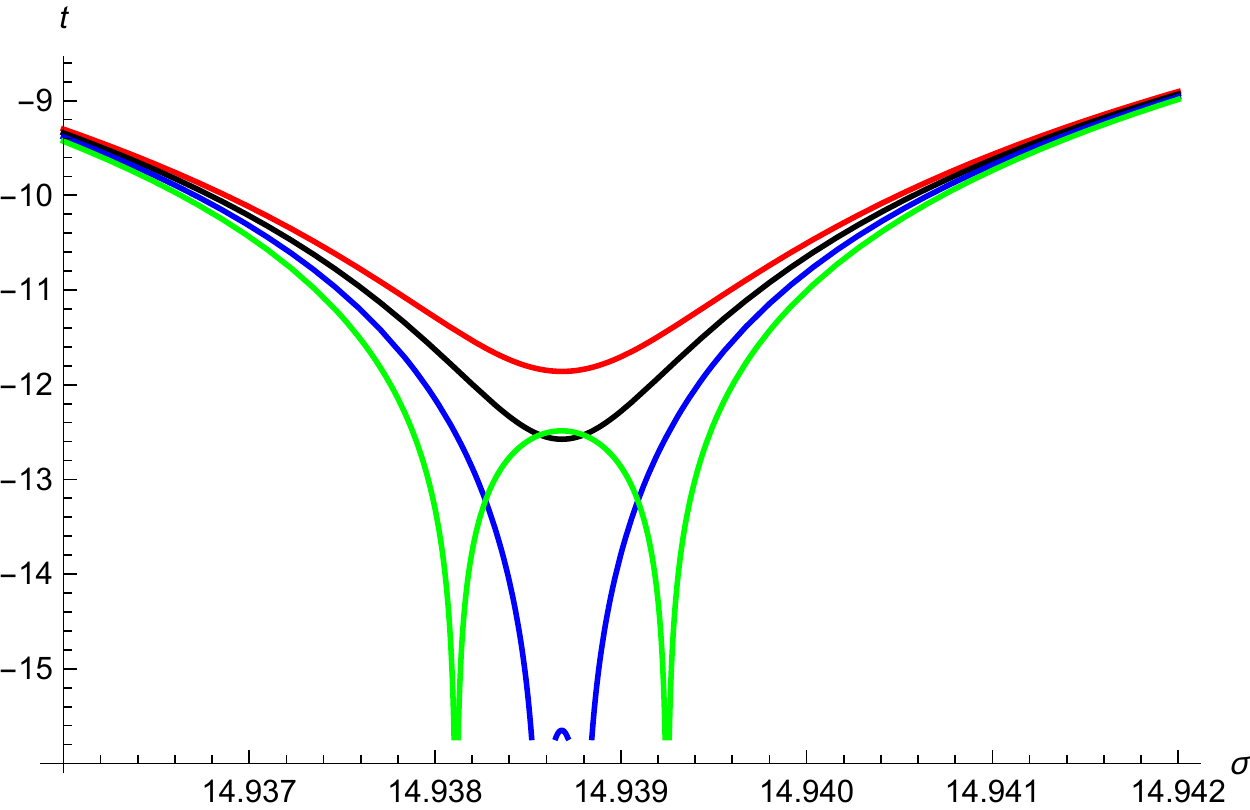}
% ~~\includegraphics[width=3.0in]{smcpt2a.pdf}
 %\includegraphics[width=3.0in]{transpcfdc.pdf}
\caption{Plots of $\log |\tilde{ S}_0(\lambda, 1/2+ i t)|$ as a function of $t$ for $\lambda$ ranging from 6.343470 to   6.343473 in equal increments, for respective line colours: red,  black, blue, green. }
\label{lowpt1}
\end{figure}
%%%%%%%%%%%%%%%%%%%%%%%%%%%%%%%%%%%%%%%%%%%%%%%%%%%%%%%%%%%%%%%%%%%%%

 %%%%%%%%%%%%%%%%%%%%%%%%%%%%%%%%%%%%%%%%%%%%%%%%%%%%%%%%%%%%%%%%%%%%%
\begin{figure}[h]
\includegraphics[width=3.0in]{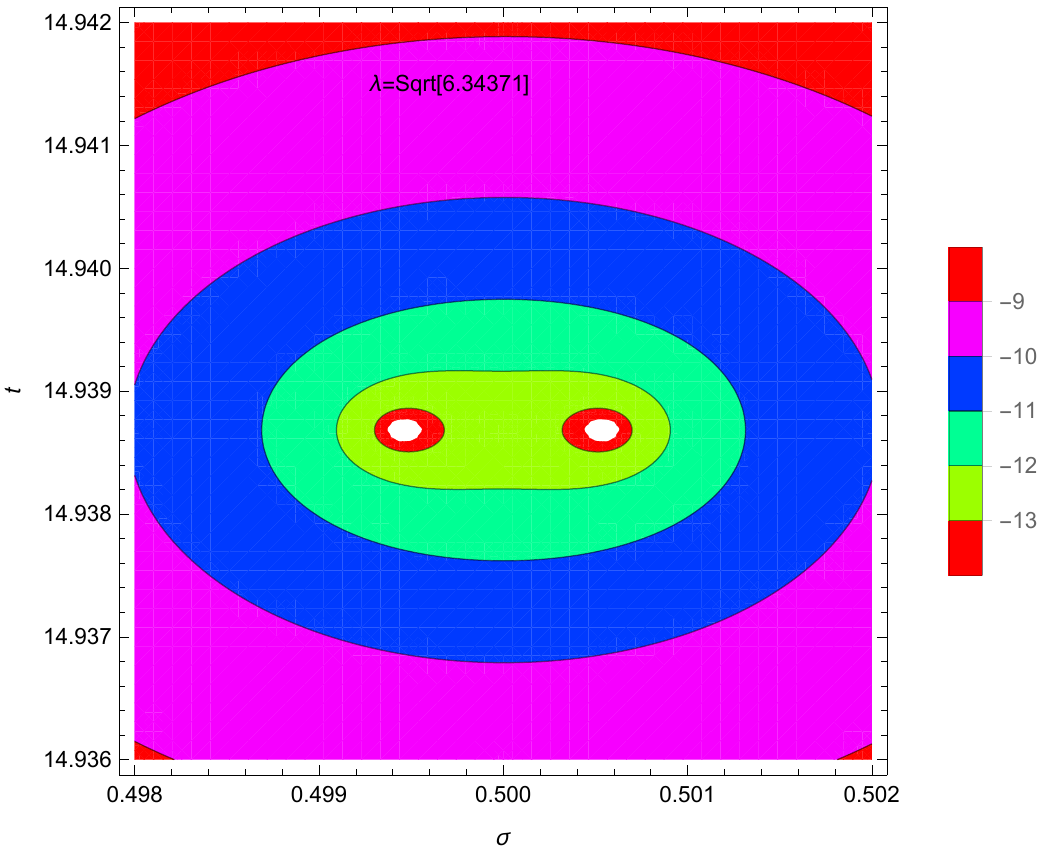} ~~\includegraphics[width=3.0in]{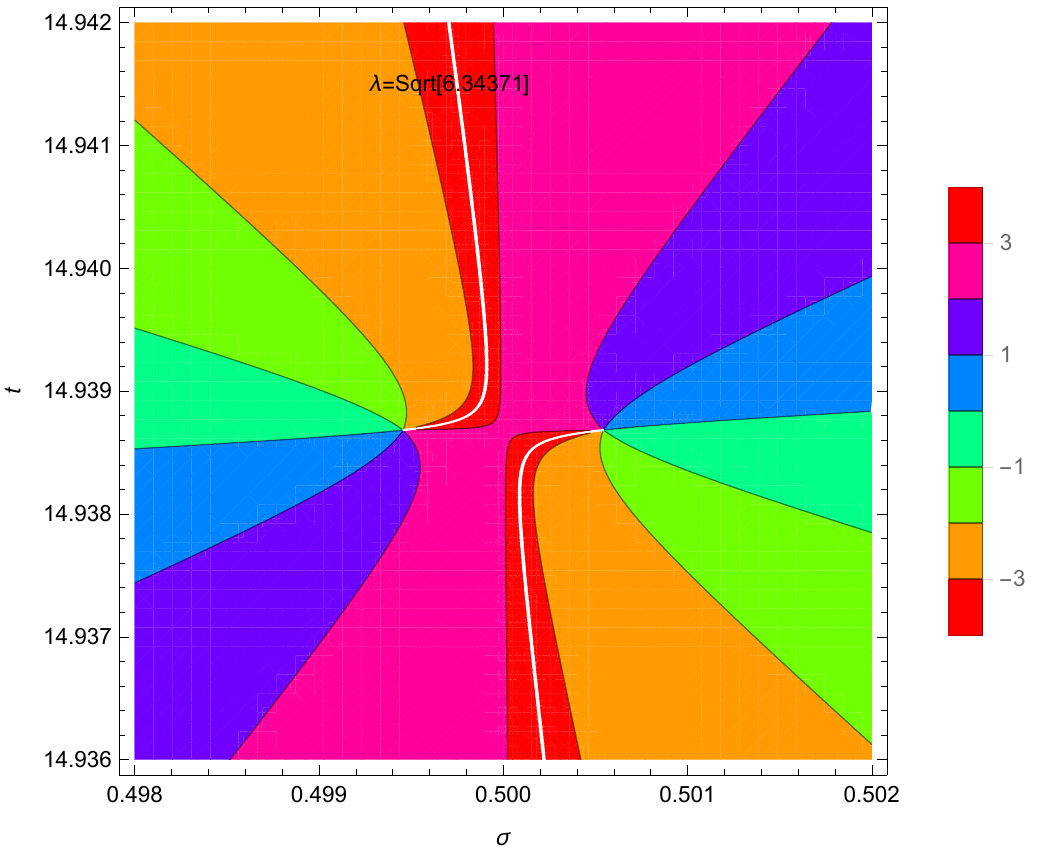}
\caption{Contour plots of the logarithmic modulus (left) and the argument (right)  of $\tilde{ S}_0(\lambda, \sigma+ i t)$ for $\lambda=6.34371$. }
\label{lowpt2}
\end{figure}
%%%%%%%%%%%%%%%%%%%%%%%%%%%%%%%%%%%%%%%%%%%%%%%%%%%%%%%%%%%%%%%%%%%%%

\end{document}